\theoremstyle{definition} 
\newtheorem{definition}{Definition}[section]
\newtheorem{question}[definition]{Question}
\newtheorem{remark}[definition]{Remark}
\newtheorem*{definition*}{Definition}
\newenvironment{proof-idea}{\noindent \textit{Proof idea.}}{\nobreak\hfill\qed\bigskip}
\theoremstyle{plain}
\newtheorem{theorem}[definition]{Theorem}
\newtheorem{conjecture}[definition]{Conjecture}
\newtheorem*{proposition*}{Proposition}
\newtheorem*{lemma*}{Lemma}
\newtheorem*{corollary*}{Corollary}
\newtheorem*{theorem*}{Theorem}
\newtheorem*{conjecture*}{Conjecture}
\newcommand{\definedWord}[1]{\emph{#1}}
\newcommand{\ie}{i.\,e.}
\newcommand{\eg}{e.\,g.}
\newcommand{\cc}[1]{\ensuremath{\mathsf{#1}}} 
\DeclareMathOperator{\poly}{poly}
\begin{document}

\title{$\cc{NP}$-hard sets are not sparse unless $\cc{P} = \cc{NP}$: \\
An exposition of a simple proof of Mahaney's Theorem, with applications}

\author{Joshua A. Grochow\footnote{Department of Computer Science, University of Colorado at Boulder, \texttt{joshua.grochow@colorado.edu} and Santa Fe Institute, \texttt{jgrochow@santafe.edu}}}

\maketitle

\begin{abstract}
Mahaney's Theorem states that, assuming $\cc{P} \neq \cc{NP}$, no $\cc{NP}$-hard set can have a polynomially bounded number of yes-instances at each input length. We give an exposition of a very simple unpublished proof of Manindra Agrawal whose ideas appear in Agrawal--Arvind (``Geometric sets of low information content,'' Theoret. Comp. Sci., 1996). 
This proof is so simple that it can easily be taught to undergraduates or a general graduate CS audience---not just theorists!---in about 10 minutes, which the author has done successfully several times. We also include applications of Mahaney's Theorem to fundamental questions that bright undergraduates would ask which could be used to fill the remaining hour of a lecture, as well as an application (due to Ikenmeyer, Mulmuley, and Walter, arXiv:1507.02955) to the representation theory of the symmetric group and the Geometric Complexity Theory Program. To this author, the fact that sparsity results on $\cc{NP}$-complete sets have an application to classical questions in representation theory says that they are not only a gem of classical theoretical computer science, but indeed a gem of mathematics.
\end{abstract}

\pagestyle{myheadings}
\markboth{A simple proof of Mahaney's Theorem - J. A. Grochow}{A simple proof of Mahaney's Theorem - J. A. Grochow}

\section{Introduction}
Mahaney's Theorem \cite{mahaney} is one of the seminal results in the pre-probabilistic era of computational complexity, and answers several foundational questions about the nature of the $\cc{P}$ versus $\cc{NP}$ question. The theorem states that there are no sparse $\cc{NP}$-complete sets unless $\cc{P}=\cc{NP}$. Recall that a set $L \subseteq \Sigma^{*}$ is (polynomially) sparse if the number of strings of length $n$ in $L$ is bounded by a polynomial in $n$.

Unfortunately, this theorem is much less well-known than it should be, partially due to the fact that Mahaney's original proof---using the so-called ``census technique'' originated by Fortune \cite{fortune}---although not complicated, takes at least one full lecture to teach. Given the myriad topics---most of them more modern and in vogue---which teachers feel they must cover in undergraduate and graduate courses on complexity, it's unclear if the census technique really deserves a whole lecture (or more), especially since the applications of the census technique more or less stop at Mahaney's Theorem. On looking at the topics often taught in complexity classes nowadays, the general consensus seems to be that it is not worth a whole lecture in lieu of other topics.

In this note, we give an exposition of a much simpler proof, due to Manindra Agrawal (unpublished, but the idea of the proof is present in Agrawal and Arvind \cite{agrawalArvind}, where the technique was put towards proving a stronger statement). This proof is so simple that it can be taught to undergraduates or a general graduate CS audience---not just theorists!---in about 10 minutes, which the author has done successfully several times. (Not to mention the value of having multiple and simpler proofs of important theorems.) The rest of a lecture could then be spent on other topics, or, I suggest, on applications of Mahaney's Theorem to fundamental questions about $\cc{P}$ versus $\cc{NP}$ that a bright undergraduate would ask (which I cover in Section~\ref{sec:applications}). Alternatively, these applications could potentially be given as homework after having taken the 10 minutes to teach Mahaney's Theorem. We'll cover more of the history in Section~\ref{sec:applications} in the course of discussing the applications of Mahaney's Theorem, but for now let's jump right into the proof. We also note that slight variants of Agrawal's proof can be used to give similarly simple proof of Fortune's Theorem---no $\cc{coNP}$-complete set is sparse unless $\cc{P} = \cc{NP}$ (see Remark~\ref{rmk:fortune})---and a common generalization of Mahaney's and Fortune's Theorems to a slightly more general kind of reduction (see Question~\ref{question:pclose}). The proof is reminiscent of the ``left set'' technique of Ogiwara--Watanabe \cite{ogiwaraWatanabe} (which they used to generalize Mahaney's Theorem to $\leq_{btt}^p$ reductions: non-adaptive Turing reductions that make only $O(1)$ queries), but when applied to the case of Mahaney's Theorem, Agrawal's proof is the simplest proof of which the author is aware (at the very least, requiring the fewest additional definitions and lemmas).

\subsection*{Acknowledgments}
We gratefully acknowledge Manindra Agrawal for allowing us to publish this proof. Certainly I am not the only person aside from Agrawal to have been aware of this proof; at the very least Thomas Thierauf was also aware, and was very supportive of writing it up. We thank Thierauf for his help, support, and suggestions. We also thank Lance Fortnow for useful comments on a draft.

\section{The proof}
\begin{theorem}[Mahaney \cite{mahaney}]
If $\cc{P} \neq \cc{NP}$, then no $\cc{NP}$-hard set is sparse.
\end{theorem}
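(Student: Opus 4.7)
\textit{Proof plan.} The plan is to prove the contrapositive: assuming a sparse set $S$ is $\cc{NP}$-hard, I would construct a polynomial-time algorithm for $\mathrm{SAT}$. Rather than working with $\mathrm{SAT}$ directly, I would consider the \emph{left set}
\[
L := \bigl\{(\varphi, y) : |y| \le n(\varphi),\ \varphi \text{ has some satisfying assignment } z \text{ with } z \le_{\mathrm{lex}} y \cdot 1^{n(\varphi)-|y|}\bigr\},
\]
where $n(\varphi)$ denotes the number of variables of $\varphi$. This $L$ lies in $\cc{NP}$ and $\mathrm{SAT}$ reduces to it via $\varphi \mapsto (\varphi, 1^{n(\varphi)})$, so by hypothesis $L$ reduces to $S$ via some polynomial-time many-one map $f$. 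The key structural fact I would exploit is that, for each $\varphi$ and each length $k \le n(\varphi)$, the slice $L_k(\varphi) := \{y \in \{0,1\}^k : (\varphi, y) \in L\}$ is upward-closed in lex order; when $\varphi$ is satisfiable, $L_k(\varphi) = \{y : y \ge_{\mathrm{lex}} z^{\ast}_k\}$, where $z^{\ast}_k$ is the length-$k$ prefix of the lex-least satisfying assignment of $\varphi$.

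The algorithm would perform a controlled breadth-first search down the binary tree of partial assignments, maintaining a candidate set $C_k \subseteq \{0,1\}^k$ of size at most $M := \poly(N)$, chosen larger than the sparsity bound on $S \cap \Sigma^{\le Q}$, where $Q$ is a polynomial bound on $|f(\varphi, y)|$ and $N = |\varphi|$. Starting from $C_0 := \{\varepsilon\}$, at each step I would form $C'_{k+1} := \{yb : y \in C_k,\ b \in \{0,1\}\}$ and then prune in two stages: \textbf{(A)} within each equivalence class of $f(\varphi, \cdot)$ in $C'_{k+1}$, keep only the lex-smallest element; \textbf{(B)} from what remains, keep only the lex-largest $M$ elements, forming $C_{k+1}$.

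I would then argue by induction on $k$ that, whenever $\varphi \in \mathrm{SAT}$, $z^{\ast}_k \in C_k$. Rule~(A) preserves $z^{\ast}_{k+1}$ because any $y <_{\mathrm{lex}} z^{\ast}_{k+1}$ lies outside $L_{k+1}(\varphi)$ (by upward-closedness and the minimality of $z^{\ast}_{k+1}$), so $y$ and $z^{\ast}_{k+1}$ differ in $L$-membership and therefore in $f$-image. Rule~(B) preserves $z^{\ast}_{k+1}$ because the elements of $L_{k+1}(\varphi)$ surviving~(A) have pairwise distinct $f$-images lying in $S \cap \Sigma^{\le Q}$, of which there are fewer than $M$; and since $L_{k+1}(\varphi)$ lies lex-above its complement, the top-$M$ cut retains every such element. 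At level $n(\varphi)$ the algorithm would simply check each $y \in C_{n(\varphi)}$ directly against $\varphi$ and accept iff some $y$ satisfies $\varphi$; the total running time is $\poly(N)$.

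The hard part will be arranging the two pruning rules so that both preserve the target prefix $z^{\ast}_k$, even though they pull in opposite lex directions: (A) favors lex-smallest within each $f$-class, while (B) favors lex-largest overall. What would reconcile this tension is that $z^{\ast}_k$ is simultaneously the lex-\emph{minimum} of $L_k(\varphi)$---so it is the lex-smallest element in its own $f$-class, since that class lies entirely inside $L_k(\varphi)$---and lex-\emph{greater} than every element outside $L_k(\varphi)$---so the top-$M$ truncation never discards it whenever only polynomially many elements of $L_k(\varphi)$ are present. Choosing the left set with the right direction of lex-closedness so that both of these statements hold simultaneously is the main conceptual step; once it is in place, the remainder of the argument is a direct induction.
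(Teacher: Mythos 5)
Your proposal is correct, but it takes a genuinely different route from the paper: it is essentially the ``left set'' technique of Ogiwara and Watanabe \cite{ogiwaraWatanabe}, which the paper explicitly mentions as a related but distinct alternative to Agrawal's argument. Agrawal's proof works directly with the downward self-reduction tree of $\varphi$ and needs no auxiliary language and no ordering: at each level it forms $q_i = f(\varphi_1 \vee \varphi_i)$ and observes that either all the $q_i$ are distinct (then there are too many of them to all lie in the sparse set, so some $\varphi_1 \vee \varphi_i$ is unsatisfiable and $\varphi_1$ may be discarded) or two of them collide (then $\varphi_i$ and $\varphi_j$ are interchangeable modulo $\varphi_1$ and one may be discarded); the only invariant is that some satisfiable formula survives at each level. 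Your approach instead introduces the left set $L$, whose slices $L_k(\varphi)$ are lex-upward-closed with minimum $z^{\ast}_k$, and maintains a polynomial-size window of prefixes; your two pruning rules and the induction on $z^{\ast}_k$ are sound as written---rule (A) cannot kill $z^{\ast}_k$ because anything lex-below it lies outside $L_k(\varphi)$ and hence maps outside $S$, and rule (B) cannot kill it because the surviving members of $L_k(\varphi)$ have distinct images in $S \cap \Sigma^{\le Q}$ and sit lex-above everything else. What your route buys: it is the formulation that scales to bounded-truth-table reductions (which is exactly what Ogiwara--Watanabe used it for), and it recovers an explicit lex-least satisfying assignment rather than only a decision. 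What Agrawal's route buys: no auxiliary $\cc{NP}$ language, no lexicographic bookkeeping, and a correctness argument that fits in two short cases, which is why the paper advertises it as teachable in ten minutes; it also adapts with one-line changes to Fortune's Theorem (Remark~\ref{rmk:fortune}) and to one-query truth-table reductions (Question~\ref{question:pclose}).
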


\begin{proof}[Proof (M. Agrawal, compare Agrawal--Arvind \cite{agrawalArvind})]
Suppose $L$ is a sparse $\cc{NP}$-hard language. We will show how to solve SAT in polynomial time. Suppose we wish to decide whether a given propositional formula $\varphi(x_1, \dotsc, x_n)$ is satisfiable. Consider the downward self-reduction tree of $\varphi$: in its first layer, we get the two formulas $\varphi_{0} = \varphi(0, x_2, \dotsc, x_n)$ and $\varphi_1 = \varphi(1, x_2, \dotsc, x_n)$. Each level $\ell$ of the tree has the crucial property that 
\begin{equation} \label{eqn:property}
\varphi \text{ is satisfiable } \Leftrightarrow \text{ at least one formula at level $\ell$ is satisfiable}.
\end{equation}

The problem with using the downward self-reduction tree to solve SAT is that it takes exponential time, since it essentially amounts to trying all $2^n$ possible assignments. However, if we could prune the tree at each level in polynomial time in such a way as to preserve property (\ref{eqn:property}), but only have $\poly(n)$ many strings at each level, then the resulting SAT algorithm would work in polynomial time, since there are only $n$ levels. This is exactly what we will show how to do.

Let $f\colon \Sigma^{*} \to \Sigma^{*}$ be a polynomial-time many-one reduction from SAT to $L$. Let $s(n)$ (``$s$'' for ``sparsity'') be a polynomial bounding the number of strings in $L$ of length $\leq n$, and let $r(n)$ (``$r$'' for ``reduction'') be a polynomial bound on the length-stretch of $f$, that is, $|f(\varphi)| \leq r(|\varphi|)$ for all $\varphi$. We will prune the downward self-reduction tree to have at most $t(n)+1$ formulas at each level (``$t$'' for ``tree''), where $t(n) = s(r(2n+5))$, which is polynomially bounded. (We use $2n+5$ because we will apply $f$ to strings of the form $(\varphi) \vee (\psi)$ where $|\varphi|, |\psi| \leq n$, and we had to add the 5 symbols ``$()\vee()$'' to combine these two formulae together. For clarity below, we will not include these extra parentheses.)

The construction proceeds in stages, starting with the first stage at the first level of the tree. Let $\varphi_1, \dotsc, \varphi_k$ be the formulas at the current level of the tree at the current stage. If $k \leq t(n)+1$ we continue to the next level; otherwise, $k > t(n) + 1$. For $i=2, \dotsc, k$, let $q_i = f(\varphi_1 \vee \varphi_i)$, where as usual ``$\vee$'' denotes ``OR;'' note that we do not include $i=1$. 
After each stage, the remaining $\varphi_i$ at the current level of the tree are re-indexed to start from $1$ again.

\textbf{Case 1:} All $q_i$ are distinct strings. Remove $\varphi_1$ from the tree and continue to the next stage. In this case, $\varphi_1$ must have been unsatisfiable, so removing it is okay, \ie, preserves property (\ref{eqn:property}): since there are strictly more than $s(r(2n+5))$ distinct $q_i$, at least one of them must not be in $L \cap \Sigma^{\leq r(2n+5)}$, say $q_i$. Since $q_i = f(\varphi_1 \vee \varphi_i) \notin L$, $\varphi_1 \vee \varphi_i$ is not satisfiable, and in particular $\varphi_1$ is not satisfiable. (Note that we do not know, and we do not \emph{need} to know, which $q_i$ is not in $L$.)

\textbf{Case 2:} $q_i = q_j$ for some distinct $i,j$. Remove $\varphi_i$ from the tree and continue to the next stage. Let's see why this is okay. If $\varphi_1$ is satisfiable, then removing $\varphi_i$ preserves property (\ref{eqn:property}) (recall $i \neq 1$). If $\varphi_1$ is not satisfiable, then $\varphi_i$ is satisfiable if and only if $\varphi_1 \vee \varphi_i$ is satisfiable, if and only if $\varphi_1 \vee \varphi_j$ is satisfiable  (since $f(\varphi_1 \vee \varphi_i) = f(\varphi_1 \vee \varphi_j)$), if and only if $\varphi_j$ is satisfiable. So again, removing $\varphi_i$ preserves property (\ref{eqn:property}).
\end{proof}

\begin{remark} \label{rmk:fortune}
One can give a similarly simple proof of Fortune's Theorem \cite{fortune}, which says that no $\cc{coNP}$-hard set is sparse unless $\cc{P} = \cc{NP}$, or equivalently that no $\cc{NP}$-hard set is co-sparse unless $\cc{P} = \cc{NP}$. Here is how the proof should be modified: consider the language TAUT of propositional tautologies instead of SAT. Replace property (\ref{eqn:property}) by ``$\varphi$ is a tautology if and only if \emph{every} formula at level $\ell$ is a tautology.'' Use $\varphi_1 \wedge \varphi_i$ instead of $\varphi_1 \vee \varphi_i$. Proceed as before. In Case 1, we can simply stop, since in this case $\varphi_1$ is not a tautology and therefore $\varphi$ is not a tautology. In Case 2, we remove $\varphi_i$ as before. If $\varphi_1$ is not a tautology, then removing $\varphi_i$ leaves this non-tautology in the tree so we're fine. If $\varphi_1$ is a tautology, then $\varphi_i \in \text{TAUT} \Leftrightarrow \varphi_1 \wedge \varphi_i \in \text{TAUT} \Leftrightarrow \varphi_1 \wedge \varphi_j \in \text{TAUT} \Leftrightarrow \varphi_j \in \text{TAUT}$. 
\end{remark}

\begin{remark} \label{rmk:tradeoff}
From the proof, we see that Mahaney's Theorem---and Fortune's Theorem, and a common generalization as in Question~\ref{question:pclose}---has a smooth trade-off: if there is an $\cc{NP}$-hard language with at most $s(n)$ strings of length $\leq n$, then $\cc{NP} \subseteq \cc{TIME}(\poly(s(n)))$. In particular, if the Exponential Time Hypothesis \cite{impagliazzoPaturi} holds, then $\cc{NP}$-hard sets and $\cc{coNP}$-hard sets both have exponentially many strings of length $\leq n$, for all $n$.
\end{remark}

\section{Some applications} \label{sec:applications}
These applications are covered elsewhere, but for the sake of being self-contained and convincing the reader to actually learn and/or teach Mahaney's Theorem and Agrawal's proof thereof, I highlight them here.

\subsection{Foundational questions about computational complexity}
I present these first two applications in a question-and-answer format, as these are questions a bright undergraduate might ask when first learning about $\cc{P}$ versus $\cc{NP}$. Although these questions and their answers are covered in Chapter 4 of Sch\"{o}ning's book \cite{schoningBook}, the proofs there are slightly different. The first question is answered directly by Mahaney's Theorem; the second question is answered by a very slight variant of Agrawal's proof (quite distinct from the proof given in Sch\"{o}ning \cite[Chapter~4]{schoningBook}).

\begin{question}
Isn't it possible that $\cc{P} \neq \cc{NP}$, but that there is an algorithm that correctly solves SAT on all instances, but only runs in polynomial time on ``most'' instances, say, all but polynomially many?
\end{question}

This kind of question was first studied by Meyer and Paterson \cite{meyerPaterson}, who referred to an algorithm that runs in polynomial time on all instances except a sparse set as \definedWord{almost polynomial-time} or \definedWord{APT} (not to be confused with the more modern complexity class $\cc{AlmostP}$, which happens to equal $\cc{BPP}$).

\begin{proof}[Answer]
No. We prove the contrapositive. Suppose there is an almost polynomial-time algorithm $A$ that solves SAT. Let $t(n)$ be a polynomial bounding the runtime of $A$ on all strings not in the sparse set $N \subset \Sigma^*$. Without loss of generality, we may assume that for any $x \in N$, $A(x)$ takes strictly more than $t(|x|)$ time, otherwise we could have excluded $x$ from $N$ in the first place. Let $N' = N \cap \text{SAT}$; we will show that if $N'$ is empty then SAT is in $\cc{P}$, and otherwise $N'$ is $\cc{NP}$-hard (in fact, $\cc{NP}$-complete, but we won't need that). In this latter case, since $N$ was sparse, so is $N'$, and hence by Mahaney's Theorem we can again conclude that $\cc{P} = \cc{NP}$.

If $N'$ is empty, then we can directly show that SAT is in $\cc{P}$: run $A(x)$ for $t(|x|)$ steps. If it has halted, output whatever $A(x)$ did. If it hasn't halted, reject. Hence, if $N'$ is empty, then $\cc{P} = \cc{NP}$.

Otherwise, we show that $N'$ is $\cc{NP}$-hard. The following is a reduction from SAT to $N'$: let $x_{yes}$ be a fixed element in $N'$---which exists since $N'$ is nonempty---and $x_{no}$ a fixed element outside of $N'$---which exists since $N'$ is sparse. On input $x$, run $A(x)$ for $t(|x|)$ steps. If $A(x)$ has accepted by this time, output $x_{yes}$; if $A(x)$ has rejected by this time, output $x_{no}$. Otherwise, simply output $x$. It is easily verified that this is a reduction $\text{SAT} \leq_{m}^{p} N'$. Thus, whether $N'$ is empty or nonempty, we find that $\cc{P} = \cc{NP}$.
\end{proof}

\begin{question} \label{question:pclose}
Isn't it possible that $\cc{P} \neq \cc{NP}$, but that there is a polynomial-time algorithm that correctly solves SAT on ``most'' instances, say, all but polynomially many?
\end{question}

A language $L$ such that there is a polynomial-time algorithm that correctly solves $L$ on all but polynomially many instances is called \definedWord{$\cc{P}$-close}. To rephrase this definition in terms of sparse sets: a language $L$ is $\cc{P}$-close if and only if there is a language $L' \in \cc{P}$ such that the symmetric difference $L \Delta L'$ is sparse.

\begin{proof}[Answer]
No. We prove the contrapositive. Suppose that SAT is $\cc{P}$-close: there is a language $L \in \cc{P}$ such that the symmetric difference $S = \text{SAT} \Delta L$ is sparse. We will show that $S$ is $\cc{NP}$-hard (in fact, complete) under a slightly weaker kind of reduction than many-one. A very slight variant of Agrawal's proof from above will show that this is still enough to conclude $\cc{P} = \cc{NP}$. 

First we show there is a one-query polynomial-time Turing reduction from SAT to $S$. On input $x$, first compute $L(x)$ in deterministic polynomial time. If $L(x) = 0$, then $x \in \text{SAT} \Leftrightarrow x \in S$, so we make the query $x$ to $S$ and output the corresponding answer. If $L(x) = 1$, then $x \in \text{SAT}$ if and only if $x$ is \emph{not} in $S$, so we again make the query $x$ to $S$, but now we reverse the answer.

Now we show how to modify Agrawal's proof to show that if there is a sparse $\cc{NP}$-hard set under the preceding type of reduction (one-query polynomial-time Turing, also called ``1-truth table''), then $\cc{P} = \cc{NP}$. We can think of such a reduction as follows: there is a polynomial-time function $f\colon \Sigma^* \to \Sigma^*$ which determines which query to make, and a polynomial-time function $g\colon \Sigma^{*} \to \{\texttt{orig}, \texttt{neg}\}$ which determines whether the answer to the query is the answer to the original question or its negation. Proceed as in Agrawal's proof, with the following modifications (we follow the notation from the proof above). Rather than pruning down to $t(n)+1$ formulas at each level, we'll prune down to $2(t(n)+1)$. Instead of only considering $q_i = f(\varphi_1 \vee \varphi_i)$, we consider the pair $(q_i, b_i) = (f(\varphi_1 \vee \varphi_i), g(\varphi_1 \vee \varphi_i))$. 

\textbf{Case 1: } All $q_i$ are distinct. Then there are either at least $t(n)+1$ indices $i$ such that $b_i=\texttt{orig}$ or $t(n)+1$ indices $i$ such that $b_i = \texttt{neg}$. In the former case, we remove $\varphi_1$ as in the original proof, with the same proof of correctness. In the latter case, we have at least $t(n)+1$ indices $i$ such that $b_i = \texttt{neg}$. Since there are at least $t(n)+1$ such indices and all the $q_i$ are distinct, at least one of the $q_i$ with $b_i = \texttt{neg}$ is not in $L$. But this means exactly that $\varphi_1 \vee \varphi_i$ is satisfiable, and hence that the original formula $\varphi$ is satisfiable, so we stop and accept.

\textbf{Case 2:} Some $q_i = q_j$ for distinct $i,j$. If $b_i = b_j$, then we remove $\varphi_i$ as in the original proof, which works for essentially the same reason as before. If $b_i \neq b_j$, then we immediately accept. For without loss of generality, suppose $b_i = \texttt{orig}$ and $b_j = \texttt{neg}$, and let $q = q_i = q_j$. Either $q \in L$, in which case $(q_i, b_i)$ yields a positive answer, or $q \notin L$, in which case $(q_j, b_j)$ yields a positive answer. Either way, $\varphi$ is satisfiable.
\end{proof}

\subsection{History and original motivation: the Berman--Hartmanis Isomorphism Conjecture}
When we write down an $\cc{NP}$-complete set, we typically talk in terms of natural mathematical objects such as graphs, formulas, etc. But of course, the theory of computation demands that these objects be encoded in some way into strings over a finite alphabet $\Sigma$. Typically we sweep this encoding process under the rug, saying that any natural and reasonable encoding will do. But technically, even amongst natural and reasonable encodings, different encodings of, say, SAT, yield \emph{different languages} as subsets of $\Sigma^{*}$. However, these languages are isomorphic, in the sense that there is a polynomial-time computable bijection $f\colon \Sigma^* \to \Sigma^*$ with polynomial-time computable inverse $f^{-1}$ which sends one encoding of SAT to another. Two subsets of $\Sigma^*$ that are the same up to a polynomial-time computable and polynomial-time invertible bijection of $\Sigma^*$ are called \definedWord{(p-)isomorphic}. Thus, isomorphic languages are essentially just re-encodings of one another.

Berman and Hartmanis \cite{bermanHartmanis} conjectured that all $\cc{NP}$-complete sets are isomorphic---that is, they are essentially all just re-encodings of SAT. They showed some relatively mild conditions under which an $\cc{NP}$-complete set could be shown isomorphic to SAT, and all known natural $\cc{NP}$-complete sets satisfy these conditions, and hence are isomorphic. (There are a few candidate exceptions \cite{josephYoung, KMR}, but they were all constructed explicitly for the purpose of refuting the conjecture.) This is really quite remarkable, as it says that, for example, the set of knots of genus at most $g$, the set of Hamiltonian graphs, the set of solvable Super Mario Brothers levels \cite{mario}, and the set of quadratic Diophantine equations with positive solutions are all not merely computationally equivalent to SAT, but simply represent different choices of encoding satisfiable formulas into Boolean strings!

The Berman--Hartmanis Isomorphism Conjecture was central to work on computational complexity in the early 1980s. One of the properties that they pointed out is preserved by isomorphism is the density of a language---the number of strings up to length $n$, up to polynomial transformations. It was thus natural for them to ask whether there existed an $\cc{NP}$-complete problem that was sparse, which would therefore refute the Isomorphism Conjecture. Mahaney's Theorem was originally motivated by this question, and gave as strong as possible a negative answer. This string of ideas is also covered in the survey article by Hartmanis and Mahaney \cite{hartmanisMahaney}.

Sparse sets were in fact one of the central concepts in computational complexity in the 1980s, though I hope to convince you of their continued relevance and value today. In addition to the motivation from the Isomorphism Conjecture, sparse sets also became central because of their relation to circuit complexity. In particular, a language is in $\cc{P/poly}$ if and only if it is polynomial-time Turing reducible to some sparse set (due to Meyer, cited in Berman and Hartmanis \cite{bermanHartmanis}). From this perspective, the famous Karp--Lipton Theorem---$\cc{NP} \subseteq \cc{P/poly}$ implies $\cc{PH} = \cc{\Sigma_2 P}$---can be seen as a statement about the consequences of a sparse set being $\cc{NP}$-hard under polynomial-time Turing reductions. Improving the conclusion of Karp--Lipton to $\cc{P} = \cc{NP}$---or equivalently, improving the reductions in Mahaney's Theorem to polynomial-time Turing reductions---requires nonrelativizing techniques \cite{immermanMahaney} (the same is even true for strengthening Karp--Lipton to conclude $\cc{PH} = \cc{P}^{\cc{NP}}$ \cite{heller}). However, Mahaney's Theorem \emph{has} been extended from polynomial-time many-one reductions to ``bounded truth-table'' reductions: nonadapative Turing reductions that make only $O(1)$ queries \cite{ogiwaraWatanabe}, and even slightly beyond \cite{AKM}. For more on the history and surveys of related results, see, \eg, \cite{HOW, young1, young2, clarity1, clarity2, caiOgihara} and references therein.

\subsection{Representations of the symmetric group and geometric complexity theory}
We conclude with a much more recent application \cite{IMW}, which is a statement about the representation theory of the symmetric groups, which was, in fact, not known prior to this application of a sparsity theorem. Although fully understanding the technical details here would require reading some other papers and having some knowledge of representation theory, the value of Mahaney's Theorem for representation theory and geometric complexity theory (GCT) can be understood without such knowledge. We try to give an exposition of this application to make such appreciation possible, without knowing the relevant representation theory. (The relevant representation theory can be found in, \eg, the first part of Fulton and Harris \cite{fultonHarris}, Fulton's book on Young tableaux \cite{fultonYoungTableaux}, or James's book \cite{james}.)

The GCT Program (see, \eg, \cite{gct1,gct2,gct6,gctJACM,gctCACM} and references therein and thereto) proposes to resolve major conjectures like Permanent versus Determinant ($\cc{VP}_{ws}$ versus $\cc{VNP}$) or $\cc{P}$ versus $\cc{NP}$ by associating to each complexity class certain integer multiplicities coming from representation theory. The connection they made to lower bounds was the following:

\begin{theorem}[Mulmuley and Sohoni \cite{gct1}]
If there is some representation-theoretic multiplicity which is larger for the permanent than for the determinant, then $\cc{VP}_{ws} \neq \cc{VNP}$.
\end{theorem}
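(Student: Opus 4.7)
The plan is to translate the complexity-class question into a question about orbit closures of polynomials under a $\GL$-action, where the ``representation-theoretic multiplicities'' in the statement will arise naturally as isotypic multiplicities in coordinate rings. First I would invoke Valiant's completeness theorem for the weakly-skew class: $\cc{VP}_{ws} \supseteq \cc{VNP}$ implies that for some $n = \poly(m)$, the $m \times m$ permanent $\mathrm{per}_m$, suitably padded as $\ell^{n-m}\mathrm{per}_m$ by a fresh linear form $\ell$ so that it becomes a degree-$n$ polynomial in $n^2$ variables, is a linear change of variables of the $n \times n$ determinant $\det_n$. Taking Zariski closures of the respective $\GL_{n^2}$-orbits, define the orbit closures $\mathrm{Det}_n := \overline{\GL_{n^2}\cdot \det_n}$ and $\mathrm{Per}_n^m := \overline{\GL_{n^2}\cdot(\ell^{n-m}\mathrm{per}_m)}$; the assumption $\cc{VP}_{ws} \supseteq \cc{VNP}$ yields the geometric inclusion $\mathrm{Per}_n^m \subseteq \mathrm{Det}_n$ for appropriate polynomially related $n$ and $m$.

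Next I would pass to coordinate rings. Both orbit closures are $\GL_{n^2}$-stable affine varieties, so $\C[\mathrm{Det}_n]$ and $\C[\mathrm{Per}_n^m]$ are polynomial $\GL_{n^2}$-representations, and each decomposes canonically as a direct sum of Schur modules $S^\lambda \C^{n^2}$ with finite isotypic multiplicities $m_\lambda(\mathrm{Det}_n)$ and $m_\lambda(\mathrm{Per}_n^m)$ indexed by partitions $\lambda$---these are precisely the ``representation-theoretic multiplicities'' of the theorem. The inclusion of closed subvarieties gives $I(\mathrm{Det}_n) \subseteq I(\mathrm{Per}_n^m)$, inducing a $\GL_{n^2}$-equivariant surjection of coordinate rings $\C[\mathrm{Det}_n] \twoheadrightarrow \C[\mathrm{Per}_n^m]$. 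Since surjections of $\GL$-modules can only decrease multiplicities in each isotypic component (Schur's lemma applied component-by-component), this forces $m_\lambda(\mathrm{Per}_n^m) \leq m_\lambda(\mathrm{Det}_n)$ for \emph{every} partition $\lambda$. The contrapositive is exactly the theorem: any single $\lambda$ with $m_\lambda(\mathrm{Per}_n^m) > m_\lambda(\mathrm{Det}_n)$ rules out the inclusion and therefore rules out $\cc{VP}_{ws} \supseteq \cc{VNP}$, giving $\cc{VP}_{ws} \neq \cc{VNP}$.

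The main obstacle in writing the proof is largely conceptual rather than technical: one must be careful to identify which ``multiplicity'' the theorem statement refers to (the isotypic multiplicity in the coordinate ring of the orbit closure, not, say, in the polynomial itself), and to check that the direction of the inequality is that quotients decrease multiplicity, not the reverse. Once these are pinned down, the proof reduces to the standard fact above. The genuinely hard problem---which this theorem does \emph{not} resolve---is constructing an explicit $\lambda$ for which the separation $m_\lambda(\mathrm{Per}_n^m) > m_\lambda(\mathrm{Det}_n)$ actually holds; this requires nontrivial analysis of plethysms and Kronecker coefficients, and it is in this downstream analysis that sparsity-of-multiplicities arguments (such as the Ikenmeyer--Mulmuley--Walter application of Mahaney's Theorem) come into play.
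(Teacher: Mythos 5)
The paper does not actually prove this theorem---it is stated as a citation to Mulmuley--Sohoni \cite{gct1}, and the exposition only proves the downstream Ikenmeyer--Mulmuley--Walter result. Your proposal reconstructs the standard argument from the cited source, and in outline it is correct: padding, orbit closures, the $\GL_{n^2}$-equivariant surjection $\C[\mathrm{Det}_n] \twoheadrightarrow \C[\mathrm{Per}_n^m]$ induced by inclusion of closed subvarieties, and the resulting inequality of isotypic multiplicities. Two points deserve sharpening. First, $\ell^{n-m}\mathrm{per}_m$ is \emph{not} a linear change of variables of $\det_n$ in the invertible sense: the homogenized substitution is a possibly singular linear map $\C^{n^2} \to \C^{n^2}$, so the padded permanent lies in the $\mathrm{End}_{n^2}$-orbit of $\det_n$, and one must separately note that the $\mathrm{End}$-orbit is contained in the Zariski closure of the $\GL$-orbit to get $\mathrm{Per}_n^m \subseteq \mathrm{Det}_n$. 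Second, there is a quantifier hidden in the informal statement: a single $\lambda$ with $m_\lambda(\mathrm{Per}_n^m) > m_\lambda(\mathrm{Det}_n)$ for one pair $(n,m)$ only shows that the determinantal complexity of $\mathrm{per}_m$ exceeds that particular $n$; to conclude $\cc{VP}_{ws} \neq \cc{VNP}$ one needs such obstructions for every polynomially bounded choice of $n$ as a function of $m$ (equivalently, that the determinantal complexity of $\mathrm{per}_m$ grows superpolynomially). Your closing sentence, ruling out the class inclusion from ``any single $\lambda$,'' elides this, though the mechanism you describe is exactly the right one.
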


They also made the following conjecture, which is formally stronger than what is needed to use the preceding result:

\begin{conjecture}[Mulmuley and Sohoni \cite{gct1}]
There exist representation-theoretic multiplicities that are \emph{zero} for $\cc{VP}_{ws}$ and nonzero for $\cc{VNP}$.
\end{conjecture}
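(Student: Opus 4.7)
The plan is to work in the geometric complexity theory framework. One considers the padded polynomials $\det{}_n^m := z^{m-n} \det{}_n$ and $\mathrm{perm}_n^m := z^{m-n} \mathrm{perm}_n$ as degree-$m$ forms on $\C^{n^2+1}$, and looks at the coordinate rings of the orbit closures $\overline{\GL_{n^2+1} \cdot \det{}_n^m}$ and $\overline{\GL_{n^2+1} \cdot \mathrm{perm}_n^m}$. Each is a graded $\GL_{n^2+1}$-module that decomposes as a direct sum of irreducibles $S^\lambda(\C^{n^2+1})$, and the ``representation-theoretic multiplicities'' in the conjecture are the multiplicities $a^\lambda_{\det}$ and $a^\lambda_{\mathrm{perm}}$ with which each $S^\lambda$ appears. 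The goal is to exhibit a partition $\lambda$ with $a^\lambda_{\det} = 0$ and $a^\lambda_{\mathrm{perm}} > 0$.

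First, I would hunt for candidate $\lambda$ by exploiting the enormous stabilizer of the determinant in $\GL_{n^2}$: the left-right action $(A,B) \cdot X = A X B^{-1}$ (together with transposition) preserves $\det{}_n$ up to scaling. Consequently, any nonzero highest-weight vector in the coordinate ring of the determinant's orbit closure must be invariant under this stabilizer in a suitable graded sense. This translates into a concrete branching condition: $a^\lambda_{\det} > 0$ requires the restriction of $S^\lambda$ to $\GL_n \times \GL_n$ to contain the appropriate covariant, a condition that fails for ``most'' $\lambda$. I would then try, on the permanent side, to construct an explicit nonzero highest-weight vector of shape $\lambda$---either by writing down a symmetrized Young-tableau polynomial and verifying directly that it does not lie in the vanishing ideal of the orbit, or by combinatorial and character-theoretic arguments (Schur--Weyl duality, manipulation of plethysm and Kronecker coefficients on the symmetric group side).

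The main obstacle is that the multiplicities $a^\lambda_{\det}$ and $a^\lambda_{\mathrm{perm}}$ reduce, via Schur--Weyl duality, to plethysm-type problems for which no positive combinatorial formula is known, and for which even deciding positivity is believed to be very hard. More seriously, the Ikenmeyer--Mulmuley--Walter result---whose proof via Mahaney's Theorem is the motivating application of this very exposition---shows that polynomially-bounded families of such ``occurrence obstructions'' cannot separate $\cc{VP}_{ws}$ from $\cc{VNP}$ in the padded-polynomial setting. Thus any proof of the conjecture in the form actually needed for GCT lower bounds must produce super-polynomially many obstructing $\lambda$'s at once, an entirely different order of difficulty from exhibiting a single one, and no current representation-theoretic technique comes close to achieving this.
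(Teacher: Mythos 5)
There is a fundamental problem here: the statement you are trying to prove is a \emph{conjecture} of Mulmuley and Sohoni \cite{gct1}, not a theorem of this paper, and the paper offers no proof of it. Worse, the paper explicitly records that the conjecture ``was recently disproven'' \cite{occurrence}: in the padded orbit-closure setting you correctly set up (the multiplicities $a^\lambda_{\det}$ and $a^\lambda_{\mathrm{perm}}$ of $S^\lambda(\C^{n^2+1})$ in the coordinate rings of $\overline{\GL_{n^2+1}\cdot \det_n^m}$ and $\overline{\GL_{n^2+1}\cdot \mathrm{perm}_n^m}$), it is now known that occurrence obstructions do not exist for superpolynomial $n \mapsto m(n)$, so no search for a $\lambda$ with $a^\lambda_{\det}=0$ and $a^\lambda_{\mathrm{perm}}>0$ can succeed in the regime relevant to the Permanent versus Determinant problem. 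Your own final paragraph essentially concedes this, though you attribute the no-go to Ikenmeyer--Mulmuley--Walter \cite{IMW}; that paper proves something different (that Kronecker coefficients vanish superpolynomially often, via the $\cc{NP}$-hardness of deciding their positivity together with Fortune's/Mahaney's Theorem), whereas the disproof of the conjecture is the ``no occurrence obstructions'' result \cite{occurrence}.

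Your framework and your description of the stabilizer/branching heuristic on the determinant side are accurate as an account of how people \emph{hoped} to prove the conjecture, so as exposition this is not wrong---but as a proof it has an unfillable gap, since the claim is false as stated for padded polynomials. The only honest way to present this statement is as the paper does: as a conjecture that motivated the subsequent work, now refuted, with the surviving content being the weaker (and still interesting) fact from \cite{IMW} that the Kronecker coefficients vanish superpolynomially often, which the paper derives from Fortune's Theorem via Remark~\ref{rmk:fortune}.
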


Although this conjecture was recently disproven \cite{occurrence}, the following result is still a nice application of computational complexity to representation theory.\footnote{For historical purposes, we note that before the recent disproof of the conjecture, analogous and related results (see, \eg, Ikenmeyer \cite[Section~6.3]{ikenmeyerThesis}) as well as computer experiments (also by Ikenmeyer, summarized succinctly in, \eg, Landsberg \cite[Section~6.6]{landsbergGCTsurvey}) suggested that only very few multiplicities associated to $\cc{VP}_{ws}$ were zero, which would have made it very difficult to prove the above conjecture. Of course, it's expected to be difficult given the nature of the questions it is approaching, but if too many of these multiplicities were zero, it might have made them \emph{very} hard to find. This result suggested that, despite the numerical evidence, in fact lots of these multiplicities were zero. The recent disproof of the conjecture shows that the associated multiplicities for $\cc{VNP}$ are also zero.}
The so-called ``Kronecker coefficients'' are some of the relevant representation-theoretic multiplicities associated to $\cc{VP}_{ws}$ or the determinant (they are, more precisely, an upper bound on the multiplicities associated to the determinant). The Kronecker coefficients are quite classical mathematical objects, going back more than 100 years. \textbf{For those who know a little representation theory: } more precisely, the Kronecker coefficients are the tensor product multiplicities for the symmetric group $S_n$. In other words, given two irreducible representations $V_{\lambda}$, $V_{\mu}$ of $S_n$, specified by partitions $\lambda,\mu$ of $n$, the Kronecker coefficient $k_{\lambda,\mu,\nu}$ is the multiplicity of the irreducible representation $V_{\nu}$ in the decomposition of $V_{\lambda} \otimes V_{\mu}$ into a direct sum of irreducible representations.

\begin{theorem}[Ikenmeyer, Mulmuley, Walter \cite{IMW}]
If $\cc{P} \neq \cc{NP}$, then the Kronecker coefficients are zero super-polynomially often, and hence, so are the multiplicities associated to the determinant and $\cc{VP}_{ws}$.
\end{theorem}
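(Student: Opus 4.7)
The plan is to combine two ingredients: (a) a genuinely nontrivial $\cc{NP}$-hardness result for Kronecker positivity, which is the technical heart of \cite{IMW}, and (b) Fortune's Theorem (Remark~\ref{rmk:fortune}) as proved via Agrawal's method. Concretely, I would first recall (as a black box from \cite{IMW}) that the language $K = \{(\lambda,\mu,\nu) : k_{\lambda,\mu,\nu} > 0\}$, where the partitions are encoded succinctly (in $\poly(n)$ bits for partitions of $n$), is $\cc{NP}$-hard under polynomial-time many-one reductions. With this in hand, the rest of the proof is essentially a one-liner.

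Next I would proceed by contrapositive: suppose that the Kronecker coefficients vanish at only polynomially many triples at each input length. Then the complement $\overline{K}$—the set of triples of vanishing Kronecker coefficient—is sparse, so $K$ is co-sparse. But $K$ is $\cc{NP}$-hard, so by Fortune's Theorem (Remark~\ref{rmk:fortune}) we conclude $\cc{P} = \cc{NP}$. Contrapositively, if $\cc{P} \neq \cc{NP}$, then the Kronecker coefficients are zero at super-polynomially many triples. In fact, appealing to the quantitative trade-off in Remark~\ref{rmk:tradeoff} rather than to Fortune's Theorem proper, I could even extract a quantitative lower bound on the density of vanishing triples as a function of whatever lower bound one assumes on the complexity of $\cc{NP}$ (e.g., under $\mathsf{ETH}$).

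For the second clause (``hence, so are the multiplicities associated to the determinant and $\cc{VP}_{ws}$''), I would invoke the fact already noted in the setup: the Kronecker coefficients are upper bounds on the $\cc{VP}_{ws}$/determinant multiplicities, so $k_{\lambda,\mu,\nu} = 0$ forces the corresponding determinantal multiplicity to be $0$ as well. Thus any super-polynomial lower bound on the number of vanishing Kroneckers immediately transfers to the determinantal multiplicities.

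The only genuine obstacle in this strategy is ingredient (a)—the $\cc{NP}$-hardness of Kronecker positivity—and that is exactly the representation-theoretic and combinatorial work done by Ikenmeyer, Mulmuley, and Walter. Everything else, including the reduction from ``polynomially many zeros'' to $\cc{P} = \cc{NP}$, is a completely off-the-shelf application of the simple Agrawal-style proof given above; this is precisely the point the author wants to make, namely that a classical sparsity theorem becomes a sharp tool in representation theory once paired with a suitable hardness result.
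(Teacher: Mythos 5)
Your argument is correct in outline and reaches the stated conclusion, but it is a genuinely different route from the one the paper takes, and the difference is worth spelling out. You apply Fortune's Theorem directly to the Kronecker positivity language $K$, taking as a black box the $\cc{NP}$-hardness of $K$ itself---which is the main technical theorem of \cite{IMW} and is precisely the ingredient that paper reserves for its \emph{unconditional} result. The proof given here deliberately uses a weaker and earlier ingredient: the B\"{u}rgisser--Ikenmeyer \cite{burgisserIkenmeyerMM} $\cc{NP}$-complete problem $L$ whose naturally associated $\cc{\# P}$-complete counting function upper-bounds the Kronecker coefficients. Fortune's Theorem is applied to $L$, forcing super-polynomially many no-instances of $L$, and each no-instance witnesses a vanishing Kronecker coefficient; no hardness of Kronecker positivity is needed at all. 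Your version is more direct, but it pays for that with the heavier hypothesis, and it also inherits two bookkeeping issues that the paper's route sidesteps. First, the complement of $K$ in $\Sigma^{*}$ contains every malformed string, so ``$K$ is co-sparse'' is not literally ``few vanishing triples''; you should fold the malformed strings into $K$ (harmless, since the reduction witnessing hardness outputs well-formed triples) so that $\overline{K}$ is exactly the set of vanishing triples. Second, and more substantively, under the succinct encoding for which positivity is $\cc{NP}$-hard, a single string length $m$ encodes triples of partitions of $n$ for exponentially many values of $n$, so ``polynomially many vanishing triples for each $n$'' does not immediately yield ``polynomially many strings of $\overline{K}$ at each length $m$''; you need to fix the counting convention (pad the encoding, or count per $n$) before Fortune's Theorem applies. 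Counting no-instances of a standard $\cc{NP}$-complete language and transferring that count to vanishing Kronecker coefficients, as the paper does, avoids both issues. Your handling of the second clause (the determinantal and $\cc{VP}_{ws}$ multiplicities are bounded above by Kronecker coefficients, so vanishing transfers) and your observation that Remark~\ref{rmk:tradeoff} upgrades the bound under the Exponential Time Hypothesis both agree with the paper.
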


In fact, they show the conclusion of this theorem unconditionally, but they do so by showing that deciding the positivity of Kronecker coefficients is $\cc{NP}$-hard, and then reducing from an $\cc{NP}$-complete problem that is known to have exponentially many no-instances \cite[Lemma~5.1 and Theorem~5.2]{IMW}.

\begin{proof}
B\"{u}rgisser and Ikenmeyer \cite{burgisserIkenmeyerMM} showed that there is an $\cc{NP}$-complete problem $L$, whose naturally associated $\cc{\# P}$-complete counting problem is an upper bound on the Kronecker coefficients. In particular, the number of no-instances of $L$ at length $n$ is a lower bound on the number of Kronecker coefficients of length $n$ that are zero. By Fortune's Theorem (see the simple proof in Remark~\ref{rmk:fortune}), if $\cc{P} \neq \cc{NP}$, then the number of no-instances of $L$ must grow super-polynomially in $n$.
\end{proof}

\begin{remark}
Because of the smooth tradeoff in Fortune's Theorem (see Remark~\ref{rmk:tradeoff}), even without the unconditional result of \cite{IMW}, from their conditional result one immediately gets that if the Exponential Time Hypothesis \cite{impagliazzoPaturi} holds, then the Kronecker coefficients are zero exponentially often (which is what they then proved unconditionally).
\end{remark}

\bibliographystyle{alphaurl}
\bibliography{mahaney}

\end{document}